\newlength{\figurewidth}
\newlength{\smallfigurewidth}
\newtheorem{proposition}{Proposition}
\begin{document}
\title{\large\textbf{Exploiting Computation-Friendly\\
Graph Compression Methods for\\
Adjacency-Matrix Multiplication}}
\author{%
Alexandre P Francisco$^{\ast}$, Travis Gagie$^{\dag}$, Susana Ladra$^{\ddagger}$, and Gonzalo Navarro$^{\mathsection}$\\[0.5em]
{\small\begin{minipage}{\linewidth}\begin{center}
\begin{tabular}{ccc}
$^{\ast}$INESC-ID / IST & \hspace*{0.5in} & $^{\dag}$EIT, Diego Portales University \\
Universidade de Lisboa && and CeBiB\\
Portugal && Chile\\
\url{aplf@ist.utl.pt} && \url{travis.gagie@gmail.com} \\
         &&      \\
$^{\ddagger}$Facultade de Inform\'{a}tica / CITIC & \hspace*{0.5in} & $^{\mathsection}$Department of Computer Science \\
Universidade da Coru\~{n}a && University of Chile\\
Spain && Chile\\
\url{sladra@udc.es} && \url{gnavarro@dcc.uchile.cl}
\end{tabular}
\end{center}\end{minipage}
}}


\maketitle              
\thispagestyle{empty}

\begin{abstract}
Computing the product of the (binary) adjacency matrix of a large graph with 
a real-valued vector is an important operation that lies at the heart of various
graph analysis tasks, such as computing PageRank. In this paper we show that
some well-known Web and social graph compression formats are {\em 
computation-friendly}, in the sense that they allow boosting the computation.
In particular, we show that the format of Boldi and Vigna allows computing the
product in time proportional to the compressed graph size. Our experimental
results show speedups of at least 2 on graphs that were compressed at least
5 times with respect to the original. We show that other successful graph
compression formats enjoy this property as well.
\end{abstract}
\Section{Introduction} \label{sec:introduction}
Let $\mathbf{A}$ be an $n\times n$ binary matrix and $\mathbf{x}\in I\!\!R^n$ a vector.
Matrix vector multiplication, either $\mathbf{x}\cdot\mathbf{A}$ or $\mathbf{A}\cdot\mathbf{x}^\top$, is not only a fundamental operation in mathematics, but
also a key operation in various graph-analysis tasks, when $\mathbf{A}$ is their
adjacency matrix. A well-known example, which we use as a motivation, is the 
computation of PageRank on large Web graphs. PageRank is a particular case of 
many network centrality measures that can be approximated through the power 
method~\cite{new10}. Most real networks, and in particular Web and social graphs,
have very sparse adjacency matrices~\cite{chung06}.
While it is straightforward to compute a matrix-vector product in time 
proportional to the nonzero entries of $\mathbf{A}$, the most successful Web
and social graph compression methods exploit other properties that allow them to
compress the graphs well beyond what is possible by their mere sparsity.
It is therefore natural to ask whether those more powerful compression formats 
allow us, as sparsity does, to compute the product {\em in time proportional to
the size of the compressed representation}. This is an instance of {\em
computation-friendly compression}, which seeks for compression formats that
not only reduce the size of the representation of objects, but also speeds up
computations on them by directly operating on the compressed representations.
Other examples of computation-friendly compression are pattern matching in compressed strings~\cite{GGP15}, computation of edit distance between compressible strings~\cite{HLLW13}, speedups for multiplying sequences of matrices and the Viterbi algorithm~\cite{LMWZ09}, building small and shallow circuits~\cite{GHJLN17}, among other tasks~\cite{Loh12}.

In this paper we exploit compressed representations of Web and social networks
and show that matrix-vector products can be carried out much faster than just
operating on all the nonzero entries of the matrix.
Although our approach can be extended to other compressed representations of
graphs and binary matrices, we mostly consider the representation proposed by 
Boldi and Vigna~\cite{bv04}. The relevant observation for us is that adjacency 
lists, i.e., rows in $\mathbf{A}$, are compressed differentially with respect
to other similar lists, and thus one can reuse and ``correct'' the result of 
the multiplication of a previous similar row with $\mathbf{x}^\top$.

We describe previous work in the next section. The following sections describe PageRank and the compression format of Boldi and Vigna. We then describe how we exploit that compression format to speed up matrix multiplication.
The following section contains experimental results, and we conclude
with a discussion of other compression formats that
favor matrix multiplications, and future work directions.

\Section{Previous Work}

Matrix multiplication is a fundamental problem in computer science; see, e.g.,~\cite{AVW18} for a recent survey of results.
Computation-friendly matrix compression has been already considered by others, even if indirectly.
Karande {et al.}~\cite{doi:10.1080/15427951.2009.10390646} addressed it by
exploiting a structural compression scheme, namely by introducing virtual
nodes. Although their results were similar to the ones presented in this
paper, their approach was more complex and it could not be used directly,
requiring the correction of computation results. On the other hand, contrary to
their belief, we show in this paper that representational compression schemes
do not always require the same amount of computation, providing a much simpler
approach that can be used directly without requiring corrections.

Another interesting approach was proposed by Nishino {et al.}~\cite{doi:10.1137/1.9781611973440.122}.
Although they did not exploit compression in the same way we do, they observed that intermediate
computational results for the matrix multiplication of equivalent partial
rows of a matrix are the same. They used then an adjacency forest where
rows are represented by sharing common suffixes. We should note that
the authors consider general real matrices, and not only Boolean matrices as we
do. Nevertheless they presented results for computing the PageRank over
adjacency matrices as we do, achieving similar results. Their approach
implied preprocessing the graph, however, while we start from an already
compressed graph. An interesting question is how their approach could 
be exploited on top of $k^2$-trees~\cite{Brisaboa2014}.

The question addressed here can also be of interest for the 
problem of Online Matrix-Vector (OMV) multiplication. Given
a stream of binary vectors, $\mathbf{x}_1,\mathbf{x}_2,\mathbf{x}_3,\ldots$, the results of matrix-vector multiplications
$\mathbf{x}_i\cdot\mathbf{A}$ can be computed faster than computing them independently, with
most approaches making use of previous computations $\mathbf{x}_j\cdot\mathbf{A}$, for $j<i$, to speed up
the computation of each new product $\mathbf{x}_i\cdot\mathbf{A}$~\cite{doi:10.1145/2746539.2746609,arxiv:1605.01695}.
Nevertheless, none of those approaches preprocess matrix $\mathbf{A}$
to exploit its redundancies. Hence, by exploiting a suitable succinct
representation of $\mathbf{A}$ as we do here, an improvement for OMV can
be easily obtained, with computational time depending on the
length of the succinct representation of $\mathbf{A}$ instead.

\Section{PageRank} \label{sec:pagerank}

Given $G=(V,E)$ a graph with $n = |V|$ vertices and $m = |E|$ edges, let $\mathbf{A}$ be its adjacency matrix; $A_{u v} = 1$ if $(u, v)\in E$, and $A_{u v} = 0$ otherwise.
The normalized adjacency matrix of $G$ is the matrix $\mathbf{M} = \mathbf{D}^{-1}\cdot\mathbf{A}$, where $\mathbf{D}$ is an $n\times n$ diagonal matrix with $D_{u u}$ the degree $d_u$ of $u\in V$, i.e., $D_{u u} = d_u = \sum_v A_{u v}$.
Note that $\mathbf{M}$ is the standard random walk matrix, where a random walker at vertex $u$ jumps to a neighbor $v$ of $u$ with probability $1/d_u$.
Moreover the $k$-power of $\mathbf{M}$, $\mathbf{M}^k$, is the random walk matrix after $k$ steps, i.e., $M_{u v}^k$ is the probability of the random walker being at vertex $v$ after $k$ jumps, having started at vertex $u$.
PageRank is a typical random walk on $G$ with transition matrix $\mathbf{M}$.
Given a constant $0 <\alpha < 1$ and a probability vector $p_0$, the PageRank vector $\mathbf{p}_\alpha$ is given by the following recurrence~\cite{chung07}:
\begin{equation*}
\mathbf{p}_\alpha = \alpha \mathbf{p}_0 + (1-\alpha) \mathbf{p}_\alpha\cdot\mathbf{M}\,.
\end{equation*}
The parameter $\alpha$ is called the teleport probability or jumping factor, and $\mathbf{p}_0$ is the starting vector.
In the original PageRank~\cite{pr99}, the starting vector $\mathbf{p}_0$ is the uniform distribution over the vertices of $G$, i.e., $\mathbf{p}_0 = \mathbf{1}/n$.
When $\mathbf{p}_0$ is not the stationary distribution, $\mathbf{p}_\alpha$ is called a personalized PageRank. Intuitively, $\mathbf{p}_\alpha$ is the probability of a lazy Web visitor to be at each page assuming that he/she surfs the Web by either randomly starting at a new page or jumping through a link from the current page.
The parameter $\alpha$ ensures that such a surfer does not get stuck at a dead end.
PageRank can be approximated iteratively through the power iteration method by iterating, for $t\geq 1$:
\begin{equation}\label{eq:priter}
\mathbf{p}_t = \alpha \mathbf{p}_0 + (1-\alpha) \mathbf{p}_{t - 1}\cdot\mathbf{M}\,.
\end{equation}

We show how to speed up these matrix-vector multiplications when the adjacency
matrix $\mathbf{A}$ is compressible.

\Section{Our Approach} \label{sec:approach}
Our main idea is to exploit the copy-property of adjacency lists observed in some graphs, such as Web graphs~\cite{bv04}.
The adjacency lists of neighbor vertices tend to be very similar and, hence, the rows in the adjacency matrix are also very similar. Moreover these networks reveal also strong clustering effects, with local groups of vertices being strongly connected and/or sharing many neighbors.
The copy-property effect can then be further amplified through clustering and suitable vertex reordering, an important step for achieving better graph compression ratios~\cite{BRSLLP}.
Most compressed representations for sparse graphs rely on these properties~\cite{BLNis13.2,GB11,CNtweb10}.
In this paper, we consider the WebGraph framework, a suite of codes, algorithms and tools that aims at making it easy to manipulate large Web graphs~\cite{bv04}.
Among several compression techniques used in Webgraph, our approach makes use of list referencing.

Let $\mathbf{A}$ be an $n\times n$ binary sparse matrix,
\begin{equation*}
\mathbf{A} = \begin{bmatrix}
\mathbf{v}_1\\
\vdots\\
\mathbf{v}_n
\end{bmatrix}
\end{equation*}
where $\mathbf{v}_i\in\{0,1\}^n$ is the $i$-th row, for $i = 1, \ldots, n$.
Let $\mathbf{r}\in\{0, 1, \ldots, n\}^n$ be a referencing vector such that, for $i\in \{1, \ldots, n\}$, $r_i < i$ and $\mathbf{v}_{r_i}$ is some previous row used for representing $\mathbf{v}_i$. 
Let also $\mathbf{v}_0 = \mathbf{0}$ and $r_1 = 0$.
The reference $r_i$ is found in the WebGraph framework within a given window $W$, i.e., $r_i\in\{\max(1,i-W), \dots,i\}$, and it is optimized to reduce the length of the representation of $\mathbf{v}_i$.
The line $\mathbf{v}_i$ is then represented by adding missing entries and marking spurious ones, with respect to $\mathbf{v}_{r_i}$, and encoded using several techniques, such as differential compression and codes for natural numbers~\cite{bv04,bv05}.

\begin{proposition}
Given an $n\times n$ matrix $\mathbf{A}$, $\mathbf{x}\in I\!\!R^n$, and a referencing vector $\mathbf{r}$ for $\mathbf{A}$, let $\mathbf{A}'$ and $\mathbf{w}$ be defined as follows:
\begin{equation*}
\mathbf{A}' = \begin{bmatrix}
\mathbf{v}_1 - \mathbf{v}_{r_1}\\
\vdots\\
\mathbf{v}_n - \mathbf{v}_{r_n}
\end{bmatrix}
\end{equation*}
\begin{equation*}
w_i = \mathbf{v}_{r_i} \cdot \mathbf{x}^\top
\end{equation*}
Then we have that:
\begin{equation*}
\mathbf{A} \cdot \mathbf{x}^\top = \mathbf{A}' \cdot \mathbf{x}^\top + \mathbf{w}^\top
\end{equation*}
\end{proposition}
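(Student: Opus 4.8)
The plan is to verify the identity coordinate by coordinate, which reduces the entire statement to the distributivity of the inner product over vector subtraction. First I would write out the $i$-th entry of the left-hand side: since the $i$-th row of $\mathbf{A}$ is $\mathbf{v}_i$, the $i$-th entry of $\mathbf{A} \cdot \mathbf{x}^\top$ is simply the scalar $\mathbf{v}_i \cdot \mathbf{x}^\top$.

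Next I would expand the $i$-th entry of the right-hand side. By the definition of $\mathbf{A}'$, its $i$-th row is $\mathbf{v}_i - \mathbf{v}_{r_i}$, so the $i$-th entry of $\mathbf{A}' \cdot \mathbf{x}^\top$ is $(\mathbf{v}_i - \mathbf{v}_{r_i}) \cdot \mathbf{x}^\top$. Using linearity of the dot product in its first argument, this splits as $\mathbf{v}_i \cdot \mathbf{x}^\top - \mathbf{v}_{r_i} \cdot \mathbf{x}^\top$. Adding the $i$-th entry of $\mathbf{w}^\top$, which is $w_i = \mathbf{v}_{r_i} \cdot \mathbf{x}^\top$ by definition, the two occurrences of $\mathbf{v}_{r_i} \cdot \mathbf{x}^\top$ cancel, leaving exactly $\mathbf{v}_i \cdot \mathbf{x}^\top$. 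Since this matches the corresponding entry of the left-hand side for every $i \in \{1, \dots, n\}$, the two vectors coincide.

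I do not expect a genuine obstacle here: the identity is essentially a restatement of the linearity of matrix-vector multiplication, written as $\mathbf{A} \cdot \mathbf{x}^\top = (\mathbf{A}' + \mathbf{R}) \cdot \mathbf{x}^\top$, where $\mathbf{R}$ is the matrix whose $i$-th row is $\mathbf{v}_{r_i}$, together with the observation that $\mathbf{R} \cdot \mathbf{x}^\top = \mathbf{w}^\top$. The only point deserving a moment of care is the boundary convention: for $i = 1$ we have $r_1 = 0$ and $\mathbf{v}_0 = \mathbf{0}$, so the first row of $\mathbf{A}'$ is $\mathbf{v}_1 - \mathbf{0} = \mathbf{v}_1$ and $w_1 = \mathbf{0} \cdot \mathbf{x}^\top = 0$, and the identity holds in this case as well. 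It is worth remarking that the real content of the result is not the algebraic equality but its algorithmic consequence: once $\mathbf{v}_{r_i} \cdot \mathbf{x}^\top$ has been computed for an earlier row, it can be reused as $w_i$, so that each product $\mathbf{v}_i \cdot \mathbf{x}^\top$ costs only work proportional to the number of entries in which $\mathbf{v}_i$ differs from its reference row $\mathbf{v}_{r_i}$, i.e.\ to the nonzero entries of the $i$-th row of $\mathbf{A}'$.
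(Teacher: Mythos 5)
Your proof is correct and follows essentially the same route as the paper: expand the right-hand side row by row, split $(\mathbf{v}_i - \mathbf{v}_{r_i}) \cdot \mathbf{x}^\top$ by linearity of the dot product, and cancel against $w_i = \mathbf{v}_{r_i} \cdot \mathbf{x}^\top$. Your extra remarks on the boundary convention ($r_1 = 0$, $\mathbf{v}_0 = \mathbf{0}$) and the algorithmic reading of the identity are sound additions but not needed for the equality itself.
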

\begin{proof}
By definition,
\begin{equation*}
\mathbf{A}'\cdot \mathbf{x}^\top + \mathbf{w}^\top =
\begin{bmatrix}
\mathbf{v}_1\cdot \mathbf{x}^\top - \mathbf{v}_{r_1}\cdot \mathbf{x}^\top \\
\vdots\\
\mathbf{v}_n\cdot \mathbf{x}^\top - \mathbf{v}_{r_n}\cdot \mathbf{x}^\top
\end{bmatrix}
+ 
\begin{bmatrix}
\mathbf{v}_{r_1} \cdot \mathbf{x}^\top\\
\vdots\\
\mathbf{v}_{r_n} \cdot \mathbf{x}^\top
\end{bmatrix}
= 
\begin{bmatrix}
\mathbf{v}_1\cdot \mathbf{x}^\top  \\
\vdots\\
\mathbf{v}_n\cdot \mathbf{x}^\top
\end{bmatrix}
= \mathbf{A} \cdot \mathbf{x}^\top
\end{equation*}
\end{proof}

Let us compute $\mathbf{y}^\top = \mathbf{A} \cdot \mathbf{x}^\top$ by iterating over $i=1,\ldots,n$. Then $\mathbf{w}$ can be incrementally computed because $r_i < i$ and $w_i = y_{r_i}$, ensuring that $w_{i}$ is already computed when required to compute $y_i$.
Given inputs $\mathbf{A}'$, $\mathbf{r}$ and $\mathbf{x}$, the algorithm to compute $\mathbf{y}$ is as follows:
\begin{enumerate}
\item Set $\mathbf{y} = \mathbf{0}$ and $y_0 = 0$.
\item For $i = 1,\ldots,n$, set $y_i = y_{r_i} + \sum_{j} A'_{i j} x_j$.
\item Return $\mathbf{y}$.
\end{enumerate}
Note that the number of operations to obtain $\mathbf{y}^\top = \mathbf{A} \cdot \mathbf{x}^\top$ is proportional to the number of nonzeros in $\mathbf{A}'$, that is, to the
compressed representation size.
Depending on the properties of $\mathbf{A}$ discussed before, this number may be much smaller than the number of nonzeros in $\mathbf{A}$.
We present in the next section experimental results for Web graphs, where we indeed obtain considerable speedups the computation of PageRank.

\Section{Experimental Evaluation} \label{sec:exp}
We computed the number of nonzeros $m'$ in $\mathbf{A'}$ for the adjacency matrix $\mathbf{A}$ of several graphs available at \texttt{http://law.di.unimi.it/datasets.php}~\cite{bv04,BRSLLP,BMSB}.
Whenever $|\mathbf{v}_i - \mathbf{v}_{r_i}| \geq |\mathbf{v}_i|$, we kept $\mathbf{v}_i$ as the row in $\mathbf{A'}$, since it resulted in fewer nonzeros.
Results are presented in Table~\ref{tab:datasets}, including the number of vertices $n$ and the number of edges $m$, for each graph.
Both $\mathbf{A'}$ and $\mathbf{r}$ were obtained directly from the Webgraph representation using high compression, which uses stronger referencing among adjacencies and thus favors our approach.
\begin{table}[!t]
\centering
\caption{Datasets used in the experimental evaluation, where $n$ is the number of vertices, $m$ is the number of edges (i.e., nonzeros in $\mathbf{A}$), $m'$ is the number of nonzeros in $\mathbf{A'}$, $t$ is the average time in seconds to compute a matrix-vector product with $\mathbf{A}$, $t'$ is the average time in seconds to compute a matrix-vector product with $\mathbf{A'}$, and $S$ is the speedup observed in the computation of PageRank. The first five datasets are Web crawls and the remaining ones are social networks. All datasets are available at \texttt{http://law.di.unimi.it/datasets.php}. Times and speedups were only computed for web graphs.}
\label{tab:datasets}
\vskip1ex
{\small
\begin{tabular}{|l|r|l|l|c|c|c|c|}
\hline
Graph                  & \multicolumn{1}{|c|}{$n$} & \multicolumn{1}{|c|}{$m$} & \multicolumn{1}{|c|}{$m'$} & \multicolumn{1}{|c|}{$m/m'$} & \multicolumn{1}{|c|}{$t$} & \multicolumn{1}{|c|}{$t'$} & \multicolumn{1}{|c|}{$S$} \\ \hline
eu-2015-hc              & $1.07\!\times\!10^9$ & $9.17\!\times\!10^{10}$ & $1.11\!\times\!10^{10}$ & 8.26 & 3244.0 & 1099.0 & 2.95 \\           
eu-2015-host-hc         & $1.13\!\times\!10^7$ & $3.87\!\times\!10^8$    & $1.10\!\times\!10^8$    & 3.52 & 11.55  & 8.15   & 1.42 \\           
gsh-2015-hc             & $9.88\!\times\!10^8$ & $3.39\!\times\!10^{10}$ & $7.08\!\times\!10^9$    & 4.78 & 1803.6 & 953.4  & 1.89 \\           
it-2004-hc              & $4.13\!\times\!10^7$ & $1.15\!\times\!10^9$    & $2.27\!\times\!10^8$    & 5.08 & 24.65  & 12.0   & 2.05 \\           
uk-2014-hc              & $7.88\!\times\!10^8$ & $4.76\!\times\!10^{10}$ & $6.26\!\times\!10^9$    & 7.58 & 2034.0 & 665.8  & 3.05 \\ \hline    
twitter-2010-hc         & $4.17\!\times\!10^7$ & $1.47\!\times\!10^9$    & $1.44\!\times\!10^9$    & 1.02 &   --   &   --   &   -- \\
amazon-2008-hc          & $7.35\!\times\!10^5$ & $5,16\!\times\!10^6$    & $4.48\!\times\!10^6$    & 1.15 &   --   &   --   &   -- \\
enwiki-2013             & $4.21\!\times\!10^6$ & $1.01\!\times\!10^8$    & $9.62\!\times\!10^7$    & 1.05 &   --   &   --   &   -- \\
wordassoc.-2011    & $1.06\!\times\!10^4$ & $7.22\!\times\!10^4$    & $7.15\!\times\!10^4$    & 1.01 &   --   &   --   &   -- \\ \hline
\end{tabular}
}
\end{table}

As expected, our approach works extremely well for Web graphs, with the number of nonzeros in $\mathbf{A}'$ being less than 20\% for page graphs and less than 30\% for host graphs.
Note that Web graphs are known to verify the copy-property among adjacencies.
Other networks we tested, instead, seem not to verify this property in the same degree, and therefore our approach is not beneficial.
This was expected, as social networks are not as compressible as Web graphs~\cite{Chierichettikdd09}.
There may exist, however, other representations for these networks that may 
benefit from other compression approaches (see the next section).

We implemented PageRank using the algorithm above to compute matrix vector products.
Since Eq.~(\ref{eq:priter}) uses left products and our representation is row-oriented, we use the transposed adjacency matrix and right products.
The implementation is in Java and based on the Webgraph representation, where $\mathbf{A}'$ is represented as two graphs: a positive one for edges with weight $1$, and a negative one for edges with weight $-1$.
All tests were conducted on a machine running Linux, with an Intel(R) Xeon(R) CPU E5-2630 v3 @ 2.40GHz (8 cores, cache 32KB/4096KB) and with 32GB of RAM.
Java code was compiled and executed with OpenJDK 1.8.0\_131.

We ran 10 iterations for the Web graphs in Table~\ref{tab:datasets}, starting with the uniform distribution.
Let us consider the graphs \texttt{eu-2015-host-hc} and \texttt{it-2004-hc}.
%
Our implementation took 81.5 and 120.0 seconds for \texttt{eu-2015-host-hc} and \texttt{it-2004-hc}, respectively.
An equivalent implementation of PageRank, using the adjacency matrix $\mathbf{A}$ instead of $\mathbf{A'}$, represented with WebGraph, took 115.5 seconds and 246.5 for \texttt{eu-2015-host-hc} and \texttt{it-2004-hc}, respectively.
Hence, we achieved speedups of 1.42 and 2.05, respectively, as presented in Table~\ref{tab:datasets}.
Observed speedups are lower than what we would expect given that $\mathbf{A}'$ has 3.52 times fewer nonzeros than $\mathbf{A}$ for \texttt{eu-2015-host-hc}, and 5.08 times fewer for \texttt{it-2004-hc}.
After profiling we could observe that, although $\mathbf{A}'$ had much fewer nonzeros than $\mathbf{A}$, the nonzeros in $\mathbf{A}'$ are more dispersed than those in $\mathbf{A}$, with $\mathbf{A}$ benefiting from  contiguous memory accesses.
The speedups are nevertheless significant, namely when we are dealing with larger graphs like \texttt{eu-2015-hc}. Our implementation took 1h30m for this graph, about 3 times less than the equivalent implementation using matrix $\mathbf{A}$ instead of matrix $\mathbf{A'}$.

We replicated the experiments with code written in \texttt{C} using a plain representation
for sparse matrices, for both $\mathbf{A}$ and $\mathbf{A}'$.
The operations became 10 times faster, but the difference between operating with
both $\mathbf{A}$ and $\mathbf{A}'$ remained similar.

%
%



\Section{Final Remarks} \label{sec:final}
We have shown that the adjacency matrix compression scheme of Boldi and Vigna~\cite{bv04}
allows for computing matrix-vector products in time proportional to the 
{\em compressed} matrix size. Therefore, compression not only saves space but 
also speeds up an operation that is key for graph analysis tasks.

This is not a property unique to that compression format. Another suitable
format is the biclique extraction method of Hern\'andez and
Navarro~\cite{HNkais13}. They decompose the edges of $G$ into a number of
bicliques $(S_r,C_r)$, so that every node from $S_r$ points to every node from
$C_r$, plus a residual set of edges. The $|S_r| \cdot |C_r|$ edges of each
biclique are represented in $|S_r| + |C_r|$ words, by just listing both sets.
This format is shown to be competitive to compress both Web and social graphs.
In order to compute $\mathbf{A} \cdot \mathbf{x}^\top$, we compute for each 
biclique $r$ the value $c_r = \sum_{j \in C_r} x_j$. We then initialize $n$ 
counters $y_j = 0$ and, for each biclique $r$ and each $i \in S_r$, we add 
$c_r$ to $y_i$. Finally,
for each residual edge $A_{ij}=1$, we add $x_j$ to $y_i$. The final answer is 
the vector $\mathbf{y}^\top$, which is obtained in time proportional to the
size of the compressed matrix.

We plan to study the practical speedup obtained with this compression format. We also plan to improve the results on Boldi and Vigna's algorithm by varying the size of the window and splitting the input matrix into submatrices of consecutive columns so matches are more flexible and need not span entire rows. We will also consider
other formats where it is less clear how to translate the reduction
in space into a reduction in computation time~\cite{BLNis13.2,GB11,CNtweb10,HNkais13}, and 
study which other relevant matrix operations can be boosted
by which compression formats.

\Section{Acknowledgments}
This research has received funding from the European Union's Horizon
2020 research and innovation programme under the Marie Sk{\l}odowska-Curie
[grant agreement No 690941], namely while the first author was
visiting the University of Chile, and while the second author was
affiliated with the University of Helsinki and visiting the University
of A Coru\~na.
The first author was funded by Funda\c{c}\~{a}o para a Ci\^{e}ncia e a
Tecnologia (FCT) [grant number UID/CEC/50021/2013];
the second author was funded by Academy of Finland
[grant number 268324] and Fondecyt [grant number 1171058]; the third
author was funded by Ministerio de  Econom\'{\i}a y Competitividad
(PGE and FEDER) [grant number TIN2016-77158-C4-3-R] and Xunta de
Galicia (co-founded with FEDER) [grant numbers ED431C 2017/58;
ED431G/01]; and the fourth author was funded by Millennium Nucleus
Information and Coordination in Networks [grant number ICM/FIC
RC130003].

%

%
%

\end{document}